\begin{document}
\newtheorem{definition}{Definition}
\newtheorem{theorem}{Theorem}
\newtheorem{example}{Example}
\newtheorem{corollary}{Corollary}
\newtheorem{lemma}{Lemma}
\newtheorem{proposition}{Proposition}
\newenvironment{proof}{{\bf Proof:\ \ }}{\qed}
\newcommand{\qed}{\rule{0.5em}{1.5ex}}
\newcommand{\bfg}[1]{\mbox{\boldmath $#1$\unboldmath}}

\newcommand{\fraca}[2]{\displaystyle\frac{#1}{#2}}

\def \B{{\rm I\kern -2.1pt B\hskip 1pt}}
\def \N{{\rm I\kern -2.1pt N\hskip 1pt}}
\def \R{{\rm I\kern -2.2pt  R\hskip 1pt}}
\begin{center}

\thispagestyle{empty}

\section*{An Alternative Representation of the Negative Binomial--Lindley Distribution. New Results and Applications}

\vskip 0.2in {\sc \bf Emilio G\'omez--D\'eniz$^a$ and Enrique
Calder\'in--Ojeda$^b$

{\small\it $^a$Department of Quantitative Methods in Economics and
T$i$DES Institute.
University of Las Palmas de Gran Canaria, Spain.}\\

\noindent{\small\it$^b$  Centre for Actuarial Studies, Department
of Economics, The University of Melbourne, Australia}}
\end{center}

\begin{abstract}
\noindent In this paper we present an alternative representation
of the Negative Binomial--Lindley distribution recently proposed
by Zamani and Ismail (2010) which shows some advantages over the
latter model. This new formulation provides a tractable model with
attractive properties which makes it suitable for application not
only in insurance settings but also in other fields where
overdispersion is observed. Basic properties of the new
distribution are studied. A recurrence for the probabilities of
the new distribution and an integral equation for the probability
density function of the compound version, when the claim
severities are absolutely continuous, are derived. Estimation
methods are discussed and a numerical application is given.
\end{abstract}

\noindent {\bf Keywords}: Lindley Distribution, Mixture, Negative Binomial Distribution, EM algorithm, Insurance.

\vspace{1cm}


\vfill
\subsubsection*{Acknowledgements}
Authors thank Ministerio de Econom\'ia y
Competitividad (project ECO2013-47092).

\vspace{1cm}

\noindent Address for correspondence: Emilio G\'omez D\'eniz,
Department of Quantitative Methods, University of Las Palmas de
Gran Canaria, 35017--Las Palmas de Gran Canaria, Spain. E--mail:
{\ttfamily emilio.gomez-deniz@ulpgc.es}

\section{Introduction}
\noindent Distribution mixtures define one of the most important ways to
obtain new probability distributions in applied probability and
operational research. In this sense, and looking for a more
flexible alternative to the Poisson distribution, especially under
the overdispersion phenomena (variance larger than the mean), the negative binomial obtained as a mixture of
Poisson and gamma distributions. In a similar fashion negative binomial--Pareto (Klugman et al. (2008)) and
Poisson--inverse Gaussian distribution, also known as Sichel distribution (Willmot (1987)) have been
proposed in actuarial contexts, particularly in the automobile
insurance setting and other fields where the empirical data seems to show contagious or heterogeneity.

Recently, Zamani and Ismail (2010) proposed a mixture of the negative binomial distribution with parameters $r>0$ and $0<p<1$. For this purpose, they allow the parameter $p=1-\exp(-\lambda)$, $\lambda>0$ to follow a Lindley distribution. The resulting mixture model has been also applied also recently in the context of accident analysis by Lord and Geedipally (2011). 

In this paper we present an alternative representation of this mixture model which can be written in terms of the confluent hypergeometric function and the Pochhammer symbol. This representation shows some advantages over the one previously introduced in the literature. The new formulation provides a tractable model with attractive properties which makes
it suitable for application not only in insurance settings but
also in other fields where overdispersion is observed. Additionally, some basic
properties of the new distribution that were not examined in Zamani and Ismail (2010) are introduced. Some of these features include the unimodality and overdispersion among other properties. A recurrence for the probabilities of the new distribution together with an integral equation
for the probability density function of the compound version, when
the claim severities are absolutely continuous, are also presented. Estimation methods are discussed by factorial moment and maximum likelihood methods. In addition to these methods, an EM type algorithm is introduced when the triple mixture Poisson--Gamma--Lindley is considered.

The remaining of the paper proceeds as follows. In Section
\ref{s1} we introduce the basic distributions assumed, the negative binomial and Lindley distributions. Section \ref{s2} analyzes the basic properties of the model
including the probability function, factorial and
ordinary moments, recurrence, overdispersion and unimodality. Some methods of estimations are given in Section \ref{s3}. Section \ref{s4} studies the compound negative binomial--Lindley
distribution. An integral equation is derived for the probability density function
of the compound version, when the claim severities are
absolutely continuous, from the basic principles assumed in the collective risk model. Applications are provided in Section \ref{s5} and the work finishes with the conclusions.

\section{Basic distributions}\label{s1}
In this section we introduce the definition and some basic
properties of the negative binomial and Lindley distributions. A classical negative binomial distribution
with probability mass function
\begin{eqnarray}
p_{r,\lambda}(x)={r+x-1\choose x}\left(\frac{1}{1+\lambda}\right)^r \left(\frac{\lambda}{1+\lambda}\right)^x,\quad x=0,1,\dots\label{nbd}
\end{eqnarray}
will be denoted as $X\sim \mathscr{NB}(r,\lambda)$, where $r>0$ and
$\lambda>0$. As they will be needed later, we remind some
characteristics of this distribution. The mean, variance and the factorial moment, $\mu_{[k]}(X) = E[X(X-1)\cdots
(X-k+1)]$, of a negative binomial
distribution (see Balakrishnan and Nevzorov (2003)) are
respectively given by,
\begin{eqnarray}
E(X) &=& r\lambda,\nonumber\\
var(X) &=& r\lambda(1-\lambda),\nonumber\\
\mu_{[k]}(X) &=& (r)_k \lambda^k,\;\;k=1,2,\dots,\label{fm}
\end{eqnarray}
where $(a)_n=\Gamma(a+n)/\Gamma(a)$ represents the Pochhammer symbol and $\Gamma(s)=\int_0^\infty \tau^{s-1}e^{-\tau}\,d\tau$
denotes the complete gamma function.

The probability generating function of a random variable $X$ following the probability function (\ref{nbd}) is given by
\begin{eqnarray*}
G_X(z)=(1+\lambda(1-z))^{-r},\quad |z|\leq 1.
\end{eqnarray*}

Henceforward, we will use $X\sim \mathscr{NB}(r,\lambda)$ to denote a random variable $X$ that follows a negative binomial distribution with parameters $r>0$ and $\lambda>0$.

Although the continuous one--parameter Lindley distribution, initially introduced by Lindley (1958), has
not been widely used in the past, however it has become a popular probabilistic model in the last decade, in part as a result of its simplicity and excellent performance in practice.  In this regard, it has been chosen as mixing distribution when the parameter of the
Poisson distribution is considered random (Sankaran (1971)).
In that paper it is shown that the resulting distribution provided a better fit to the
empirical set of data considered than the negative binomial and
Hermite distributions. Recently, a good deal of attention has been
given to this probability density function and new papers
have been included in the statistical literature.  See Ghitany et al. (2008) and references therein, Ghitany et al. (2013), G\'omez--D\'eniz et al. (2013); among others. A random variable $\Lambda$ has a Lindley distribution if its probability density function is
given by,
\begin{equation}
g_{\theta}(\lambda)=\frac{\theta^2}{1+\theta}(1+\lambda)\exp(-\theta \lambda),\;\lambda>0\label{ld}
\end{equation}
where $\theta>0$. In the following, we will denote as $\Lambda\sim \mathscr{L}(\theta)$ for a random variable that follows a Lindley distribution.

\section{Representation of the negative binomial--Lindley distribution}\label{s2}
In this paper we introduce  an alternative representation of the Negative Binomial--Lindley distribution recently proposed by Zamani and Ismail (2010) that has several advantages over the latter model. This new formulation
provides a tractable model with attractive properties that makes
it suitable for applications not only in insurance settings but
also in other fields where the overdispersion phenomenon is observed.

\begin{definition}We say that a random variable $X$ has a negative
binomial--Lindley distribution if it admits the stochastic
representation:
\begin{eqnarray}
X|\lambda &\sim& \mathscr{NB}(r,\lambda),\label{def11}\\
\lambda & \sim & \mathscr{L}(\theta),\label{def12}
\end{eqnarray}
with $r,\lambda,\theta>0$. We will denote this distribution by $X\sim
\mathscr{NBL}(r,\theta)$.
\end{definition}

The next result provides closed--form expressions for the probability mass
function and factorial moments.
\begin{theorem} Let $X\sim \mathscr{NBL}(r,\theta)$ be a negative
binomial--Lindley distribution defined in
(\ref{def11})-(\ref{def12}). Some basic properties are:
\begin{itemize}
\item[(a)] The probability mass function is given by
\begin{eqnarray}
p_{r,\theta}(x)=\frac{\theta^2 (r)_x}{1+\theta}\,\mathscr{U}(x+1,3-r,\theta),\quad x=0,1,\dots,\label{nr}
\end{eqnarray}
where
\begin{eqnarray*}
\mathscr{U}(a,b,z)=\frac{1}{\Gamma(a)}\int_0^{\infty} \tau^{a-1}(1+\tau)^{b-a-1}\exp(-z\tau)\,d\tau,
\end{eqnarray*}
is the confluent hypergeometric function (see Gradshteyn and Ryzhik (1994), p. 1085, formula 9211–-4).

\item[(b)] The factorial
moment of order $k$ is given by
\begin{eqnarray}
\mu_{[k]}(X)&=& \frac{(r)_k k!(k+\theta+1)}{(1+\theta)\theta^k},\label{fmnbl}
\end{eqnarray}
with $k=1,2,\dots$

\item[(c)] The mean and the variance are given by,
\begin{eqnarray*}
E(X) &=& \frac{2+\theta}{1+\theta}\frac{r}{\theta},\\
var (X) &=& \frac{r (6 (1+\theta)+(4+\theta) ((1+\theta) \theta +r \theta)+2 r)}{\theta ^2 (1+\theta)^2}.
\end{eqnarray*}
\end{itemize}
\end{theorem}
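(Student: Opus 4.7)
The overall plan is to use the mixture definition directly: for (a), integrate the conditional negative binomial pmf against the Lindley mixing density; for (b), apply the tower property of expectation together with the factorial moments of the negative binomial stated in (\ref{fm}); and for (c), specialize the factorial moments at $k=1,2$ and assemble the variance from $\mathrm{var}(X)=\mu_{[2]}+\mu_{[1]}-\mu_{[1]}^2$.

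For part (a), I would write
\begin{eqnarray*}
p_{r,\theta}(x)=\int_0^\infty p_{r,\lambda}(x)\,g_\theta(\lambda)\,d\lambda
={r+x-1\choose x}\frac{\theta^2}{1+\theta}\int_0^\infty \lambda^x(1+\lambda)^{1-r-x}e^{-\theta\lambda}\,d\lambda.
\end{eqnarray*}
The key observation is that the factor $(1+\lambda)$ in the Lindley density cancels one power from $(1+\lambda)^{-r-x}$ coming from the negative binomial pmf, producing the exponent $1-r-x$. Matching this integral to the integral representation of $\mathscr{U}(a,b,z)$ given in the theorem, the choice $a=x+1$ (so $\tau^{a-1}=\lambda^{x}$) forces $b-a-1=1-r-x$, hence $b=3-r$; the integral evaluates to $\Gamma(x+1)\mathscr{U}(x+1,3-r,\theta)=x!\,\mathscr{U}(x+1,3-r,\theta)$. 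Combining with ${r+x-1\choose x}=(r)_x/x!$ gives (\ref{nr}).

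For part (b), by conditioning and using the factorial moments in (\ref{fm}),
\begin{eqnarray*}
\mu_{[k]}(X)=E\bigl[E[X(X-1)\cdots(X-k+1)\mid\lambda]\bigr]=(r)_k\,E[\lambda^k],
\end{eqnarray*}
so the task reduces to computing the raw moments of the Lindley density. Splitting $(1+\lambda)e^{-\theta\lambda}$ and using $\int_0^\infty\lambda^{k}e^{-\theta\lambda}d\lambda=k!/\theta^{k+1}$ and $\int_0^\infty\lambda^{k+1}e^{-\theta\lambda}d\lambda=(k+1)!/\theta^{k+2}$, one gets $E[\lambda^k]=k!(k+\theta+1)/[(1+\theta)\theta^k]$, which plugged back yields (\ref{fmnbl}).

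Part (c) is then a direct corollary: $E(X)=\mu_{[1]}$ is (\ref{fmnbl}) at $k=1$, and $\mathrm{var}(X)=\mu_{[2]}+\mu_{[1]}-\mu_{[1]}^{2}$ follows from evaluating (\ref{fmnbl}) at $k=1,2$ and simplifying. The only real obstacle is the bookkeeping in part~(a), where one must line up the exponents so that the mixing integral exactly matches Gradshteyn--Ryzhik's parametrisation of $\mathscr{U}$; once the identifications $a=x+1$ and $b=3-r$ are made, the rest is routine.
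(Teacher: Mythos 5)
Your proposal is correct and follows essentially the same route as the paper: part (a) via the compound (mixture) integral matched to the integral representation of $\mathscr{U}(a,b,z)$, part (b) via the tower property combined with the negative binomial factorial moments and the Lindley raw moments, and part (c) by specializing $k=1,2$ and using $\mathrm{var}(X)=\mu_{[2]}+\mu_{[1]}-\mu_{[1]}^{2}$. In fact you supply the bookkeeping (the identifications $a=x+1$, $b=3-r$, and the explicit computation of $E[\lambda^k]$) that the paper compresses into ``rearranging parameters'' and ``straightforwardly derived.''
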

\begin{proof}
\begin{itemize}
\item[(a)]
The probability function of $X$ can be obtained by using
the well--known compound formula,
\begin{eqnarray*}
p_{r,\theta}(x)=\int_{0}^{\infty}p_{r,\lambda}(x)g(\lambda)\,d\lambda,
\end{eqnarray*}
and rearranging parameters. Here, $g(\lambda)$ is the probability density function of the Lindley distribution in (\ref{ld}).
\item[(b)]
The factorial moments of order $k$ are obtained making use of (\ref{fm}) and having into account that

$$E(\mu_{[k]}(X) )=E_{\lambda}(E(\mu_{[k]}(X|\lambda) ))=(r)_k  E_{\lambda}(\lambda^k).$$

\item[(c)]Finally, the mean and variance are straightforwardly derived from (\ref{fmnbl}).

\end{itemize}
\end{proof}

Now, by using the fact that
\begin{eqnarray*}
\mathscr{U}(a,b,z)=z^{1-b}\,\mathscr{U}(a-b+1,2-b,z)
\end{eqnarray*}
and for computational purposes it is convenient rewrite (\ref{nr}) as
\begin{eqnarray}
p_{r,\theta}(x)=\frac{\theta^r (r)_x}{1+\theta}\,\mathscr{U}(x+r-1,r-1,\theta).\label{nr1}
\end{eqnarray}

Observe that the special case $r=1$ provides the geometric--Lindley distribution.

\begin{theorem}The probability function of an $\mathscr{NBL}$
distribution can be evaluated by the recursive formula
\begin{equation}\label{recu1}
p_{r,\theta}(x)=\frac{r+x-1}{x}p_{r,\theta}(x-1)-\frac{r}{x}p_{r+1,\theta}(x-1),\;\;x=1,2,\dots
\end{equation}
where $p_r(0)=\theta^r \exp(\theta)\Gamma(2-r,\theta)/(1+\theta)$ and $\Gamma(a,z)=\int_z^{\infty}\tau^{a-1}\exp(-\tau)\,d\tau$ is the incomplete gamma function.
\end{theorem}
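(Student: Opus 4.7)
The plan is to establish the recursion pointwise \emph{before} mixing: first prove the identity
$$p_{r,\lambda}(x)=\frac{r+x-1}{x}\,p_{r,\lambda}(x-1)-\frac{r}{x}\,p_{r+1,\lambda}(x-1)$$
for the negative binomial probability (\ref{nbd}), and then integrate both sides against the Lindley density (\ref{ld}) to pass to $p_{r,\theta}(\cdot)$. This reduces the statement to a purely algebraic manipulation of binomial coefficients and powers of $1/(1+\lambda)$.

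The key step is the splitting
$$\left(\frac{\lambda}{1+\lambda}\right)^{x}=\left(\frac{\lambda}{1+\lambda}\right)^{x-1}\left(1-\frac{1}{1+\lambda}\right),$$
which rewrites $p_{r,\lambda}(x)$, up to a constant, as the difference
$$\frac{(r)_x}{x!}\left[(1+\lambda)^{-r}\!\left(\frac{\lambda}{1+\lambda}\right)^{x-1}-(1+\lambda)^{-(r+1)}\!\left(\frac{\lambda}{1+\lambda}\right)^{x-1}\right].$$
Recognising the first bracketed term as $(x-1)!\,p_{r,\lambda}(x-1)/(r)_{x-1}$ and the second as $(x-1)!\,p_{r+1,\lambda}(x-1)/(r+1)_{x-1}$, and then using $(r)_x/(r)_{x-1}=r+x-1$ together with $(r)_x=r(r+1)_{x-1}$, yields the pointwise recursion above. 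Integrating term by term against $g_{\theta}(\lambda)$ (which is permitted since both sides are nonnegative and the mixed probabilities are finite by Theorem~1(a)) produces (\ref{recu1}).

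For the initial value, I would compute $p_{r,\theta}(0)$ directly from the mixing integral. Since $p_{r,\lambda}(0)=(1+\lambda)^{-r}$, one has
$$p_{r,\theta}(0)=\frac{\theta^{2}}{1+\theta}\int_{0}^{\infty}(1+\lambda)^{1-r}e^{-\theta\lambda}\,d\lambda,$$
and the substitutions $u=1+\lambda$ followed by $t=\theta u$ turn this into $\theta^{r}e^{\theta}\Gamma(2-r,\theta)/(1+\theta)$, matching the claimed expression.

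The only mildly delicate point is the bookkeeping with Pochhammer symbols when rewriting the two bracketed terms as $p_{r,\lambda}(x-1)$ and $p_{r+1,\lambda}(x-1)$; everything else is routine. I do not expect any analytic obstruction, because $r+1>0$ so $p_{r+1,\theta}(x-1)$ is well defined, and the integrals involved converge for all $\theta>0$.
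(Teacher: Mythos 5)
Your proof is correct and follows essentially the same route as the paper: both rest on splitting off one factor $\lambda/(1+\lambda)=1-1/(1+\lambda)$ from the negative binomial mass function and recognising the $(1+\lambda)^{-1}p_{r,\lambda}(x-1)$ piece as $\frac{r}{r+x-1}p_{r+1,\lambda}(x-1)$; the only (immaterial) difference is that you establish the recursion pointwise in $\lambda$ and then mix, whereas the paper mixes first and manipulates under the integral sign. Your explicit verification of the initial value $p_r(0)$ is a small addition the paper omits, and it checks out.
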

\begin{proof}
For the negative binomial distribution with pmf
\begin{eqnarray*}
p_{r,\lambda}(x)={r+x-1\choose x}\left(\frac{1}{1+\lambda}\right)^r \left(\frac{\lambda}{1+\lambda}\right)^x,\quad x=0,1,\dots
\end{eqnarray*}
we have the simple recursion
\begin{equation}\label{recu2}
p_{r,\lambda}(x)=\frac{\lambda}{1+\lambda}\frac{r+x-1}{x}p_{r,\lambda}(x-1),\;\;x=1,2,\dots
\end{equation}
Using the definition of a $\mathscr{NBL}$ distribution and
(\ref{recu2}) we get
\begin{eqnarray*}
  p_{r,\theta}(x) &=& \int_0^\infty p_{r,\lambda}(x)g(\lambda)\,d\lambda \\
   &=& \frac{r+x-1}{x}\int_0^\infty \frac{\lambda}{1+\lambda}p_{r,\lambda}(x-1)g(\lambda)\,d\lambda \\
   &=& \frac{r+x-1}{x}\int_0^{\infty}\left(1-\frac{1}{1+\lambda}\right)p_{r,\lambda}(x-1)g(\lambda)\,d\lambda\\
   &=& \frac{r+x-1}{x}\left[p_{r,\theta}(x-1)-\int_{0}^{\infty}\frac{1}{1+\lambda}p_{r,\lambda}(x-1)g(\lambda)\,d\lambda\right] .
\end{eqnarray*}
Now, since
\begin{eqnarray*}
\int_0^\infty  \frac{1}{1+\lambda}p_{r,\lambda}(x-1)g(\lambda)\,d\lambda &=& \frac{r}{r+x-1}\int_0^\infty p_{r+1,\lambda}(x-1)g(\lambda)\,d\lambda\\
&=& \frac{r}{r+x-1}p_{r+1,\theta}(x-1),
\end{eqnarray*}
we obtain (\ref{recu1}).
\end{proof}

Calculation of the probabilities are now easy and they do not require
the use of the confluent hypergeometric function. This result can be also obtained by using expression (12) in Willmot (1993).

\begin{proposition}
Let $Z$ a positive and continuous random variate following a gamma distribution with probability density function $f(z)\propto \lambda^{-r}\exp(-z /\lambda)$, $r>0$, $\lambda>0$, and assume that $\lambda$ is random following a Lindley distribution with parameter $\theta>0$. Then, the unconditional probability density function of $Z$ results,
\begin{eqnarray}
f(z)= \frac{2 \theta^{r/2+1} z^{r/2-1}}{(1+\theta)\Gamma(r)}\left(z K_{r-2}(2\sqrt{\theta z})+
\sqrt{\theta z}K_{r-1}(2\sqrt{\theta z})\right),\quad z>0,\label{uncond}
\end{eqnarray}

\noindent where $K_n(\cdot)$ is the modified Bessel function of the second kind.
\end{proposition}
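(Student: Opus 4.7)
The plan is to compute the unconditional density as the standard mixture integral
\[
f(z) \;=\; \int_0^{\infty} f(z\mid\lambda)\, g_{\theta}(\lambda)\,d\lambda,
\]
where $f(z\mid\lambda) = z^{r-1}\lambda^{-r}\exp(-z/\lambda)/\Gamma(r)$ is the gamma density (with the normalizing constant restored from the stated proportionality) and $g_\theta$ is the Lindley density from (\ref{ld}). Substituting and pulling out the constants that depend only on $z$ and $\theta$, the factor $(1+\lambda)$ from the Lindley density splits the integral cleanly into a sum of two integrals of the form
\[
I_\nu \;=\; \int_0^{\infty} \lambda^{\nu-1}\exp\!\left(-\theta\lambda - \frac{z}{\lambda}\right)d\lambda,
\]
with $\nu = 1-r$ in the first piece and $\nu = 2-r$ in the second.

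Next I would invoke the standard integral representation
\[
\int_0^{\infty} t^{\nu-1}\exp\!\left(-a t - \frac{b}{t}\right)dt \;=\; 2\left(\frac{b}{a}\right)^{\nu/2} K_\nu\!\left(2\sqrt{a b}\right),
\]
valid for $a,b>0$ (Gradshteyn--Ryzhik 3.471.9), with $a=\theta$ and $b=z$. This gives both integrals in closed form in terms of $K_{1-r}(2\sqrt{\theta z})$ and $K_{2-r}(2\sqrt{\theta z})$, which by the symmetry $K_{-\nu}=K_{\nu}$ coincide with $K_{r-1}(2\sqrt{\theta z})$ and $K_{r-2}(2\sqrt{\theta z})$ respectively.

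The remaining step is purely algebraic: collecting the prefactors $z^{r-1}\theta^2/[(1+\theta)\Gamma(r)]$ coming from the gamma and Lindley densities together with the $(z/\theta)^{\nu/2}$ factors produced by the Bessel formula, and checking that the powers of $\theta$ and $z$ in each summand collapse to $\theta^{r/2+1} z^{r/2}$ (in front of $K_{r-2}$) and $\theta^{r/2+3/2} z^{r/2-1/2}$ (in front of $K_{r-1}$). Factoring out $\theta^{r/2+1} z^{r/2-1}$ then produces exactly the form claimed in (\ref{uncond}). The only genuine obstacle is this bookkeeping of exponents; everything conceptual is encapsulated in the single Bessel integral identity, and the use of $K_\nu = K_{-\nu}$ to write the result with the indices $r-1$ and $r-2$ (rather than their negatives) appearing in the statement.
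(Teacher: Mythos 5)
Your proposal is correct and follows essentially the same route as the paper: write the mixture integral, use the factor $(1+\lambda)$ from the Lindley density to split it into two integrals of the form $\int_0^\infty \lambda^{\nu-1}\exp(-\theta\lambda - z/\lambda)\,d\lambda$, and evaluate each via the standard Bessel representation together with $K_{-\nu}=K_{\nu}$. In fact your write-up is slightly more explicit than the paper's, which leaves the Bessel identity and the exponent bookkeeping implicit; your stated exponents check out against (\ref{uncond}).
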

\begin{proof}
The result follows by computing the integral
\begin{eqnarray*}
f(z) &=& \frac{\theta^2   z^{r-1}}{(1+\theta)\Gamma(r)}\int_0^{\infty}\lambda^{-r}(1+\lambda)\exp\left(-z/\lambda-\lambda\theta)\right)\,d\lambda\\
&=&\frac{\theta^2   z^{r-1}}{(1+\theta)\Gamma(r)}\left(\int_0^{\infty}\lambda^{-r}\exp\left(-z/\lambda-\lambda\theta)\right)\,d\lambda\right.\\
&+&\left.\int_0^{\infty}\lambda^{1-r}\exp\left(-z/\lambda-\lambda\theta)\right)\,d\lambda\right).
\end{eqnarray*}

Hence the result.
\end{proof}

\begin{proposition} The probability density function in (\ref{uncond}) is log--concave for $r\geq 1$ and therefore unimodal.
\end{proposition}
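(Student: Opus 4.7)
Since a density that is log--concave on an open interval is automatically unimodal, the second half of the statement follows from the first, and my plan concentrates on establishing log--concavity of $f$ for $r\geq 1$. From the integral produced in the proof of the preceding proposition, write $f(z)=Cz^{r-1}I(z)$ with $C=\theta^{2}/[(1+\theta)\Gamma(r)]$ and $I(z)=\int_{0}^{\infty}\lambda^{-r}(1+\lambda)\exp(-z/\lambda-\theta\lambda)\,d\lambda$, so that
\[(\log f)''(z)=-\frac{r-1}{z^{2}}+(\log I)''(z).\]
The first summand is nonpositive exactly when $r\geq 1$, which is where the hypothesis enters.

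Cauchy--Schwarz against the positive measure $d\mu(\lambda)=(1+\lambda)\exp(-z/\lambda-\theta\lambda)\,d\lambda$ gives $(I')^{2}\leq I\cdot I''$, so $\log I$ is convex and the two summands of $(\log f)''$ have opposite signs; they must cancel quantitatively. The natural device is Pr\'ekopa's theorem applied to the joint density $h(z,\lambda)=f(z|\lambda)g_{\theta}(\lambda)$: the conditional $f(z|\lambda)$ is Gamma$(r,\lambda)$, log--concave in $z$ for $r\geq 1$; and $g_{\theta}$ is log--concave in $\lambda$ since $(\log g_{\theta})''(\lambda)=-(1+\lambda)^{-2}<0$. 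The cross term $-z/\lambda$ prevents joint log--concavity in the natural $(z,\lambda)$ chart, so I would pass to the chart $(w,s)=(\log z,\log\lambda)$, in which the joint becomes (up to constants) $(1+e^{s})\exp\{rw+(1-r)s-e^{w-s}-\theta e^{s}\}$. A routine Hessian computation then yields $H_{ww}=-e^{w-s}\leq 0$ and determinant $e^{w}[\theta-(1+e^{s})^{-2}]$, so the joint density is log--concave whenever $\theta\geq(1+e^{s})^{-2}$, an inequality automatic for $\theta\geq 1$. Pr\'ekopa then delivers log--concavity of the density of $W=\log Z$.

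The hard part of the argument is twofold: the determinant condition breaks down for small $\theta$ and small $\lambda$, so the $(w,s)$-chart argument does not by itself cover $\theta<1$; and the transfer from log--concavity of $f_{\log Z}$ to that of $f_{Z}$ via $\log f_{Z}(z)=\log f_{\log Z}(\log z)-\log z$ requires the stronger inequality $\psi'(w)-\psi''(w)\geq 1$ (with $\psi=\log f_{\log Z}$) rather than mere $\psi''\leq 0$. To close both gaps I would fall back on a direct computation from (\ref{uncond}): using the identity $(d/du)[u^{\nu}K_{\nu}(u)]=-u^{\nu}K_{\nu-1}(u)$, one may rewrite $f(z)\propto \theta\,h(z)-z\,h'(z)$ with $h(z)=z^{(r-1)/2}K_{r-1}(2\sqrt{\theta z})$, after which log--concavity of $f$ reduces to a sign analysis of the ratio $K_{r-2}/K_{r-1}$ of consecutive modified Bessel functions; this can be controlled using the Tur\'an-type inequality $K_{\nu}^{2}\leq K_{\nu-1}K_{\nu+1}$ together with the log--convexity of $K_{\nu}$.
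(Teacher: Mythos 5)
Your proposal does not close the argument: the two Pr\'ekopa routes are abandoned with gaps you yourself acknowledge, and the final ``direct computation'' via Bessel identities is only announced, never carried out. That said, the obstruction you identify in the $(z,\lambda)$ chart is precisely the defect in the paper's own proof, which consists of the one--line claim that the gamma kernel is log--concave in $z$ for $r\geq 1$, that the Lindley density (\ref{ld}) is log--concave in $\lambda$, and that Pr\'ekopa's theorem does the rest. Pr\'ekopa requires \emph{joint} log--concavity of $f(z\mid\lambda)g_\theta(\lambda)$ in $(z,\lambda)$, and, as you observe, the Hessian of $(z,\lambda)\mapsto z/\lambda$ is indefinite, so the hypothesis fails. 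You have correctly diagnosed that the published argument is not valid; separate log--concavity in each variable is not enough, and mixtures do not in general preserve log--concavity.

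The reason you cannot close the gap is that the statement itself is false, and your own intermediate observations essentially prove this. At $r=1$ your decomposition gives $(\log f)''=(\log I)''$, and the Cauchy--Schwarz inequality $(I')^2\le I\,I''$ is \emph{strict} because $1/\lambda$ is not a.e.\ constant under the nondegenerate tilted Lindley measure; hence $f$ is strictly log--convex (a scale mixture of exponentials is always log--convex), and in particular not log--concave. For general $r$, the asymptotics $K_\nu(u)\sim\sqrt{\pi/(2u)}\,e^{-u}$ applied to (\ref{uncond}) (where the $zK_{r-2}$ term dominates) give
\[
\log f(z)=\mathrm{const}+\Bigl(\frac{r}{2}-\frac{1}{4}\Bigr)\log z-2\sqrt{\theta z}+o(1),
\qquad
(\log f)''(z)\sim \frac{\sqrt{\theta}}{2}\,z^{-3/2}>0
\]
as $z\to\infty$, so the right tail of $f$ is log--convex for every $r$ and $\theta$; no Tur\'an--type inequality for the $K_\nu$ can rescue log--concavity. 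What may survive is unimodality --- for $r=1$ the density is decreasing, and for $r>1$ one would have to show directly that $(\log f)'(z)=(r-1)/z-E_z[1/\lambda]$ changes sign exactly once --- but this must be established without log--concavity, and the subsequent proposition in the paper that relies on the present one inherits the same problem.
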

\begin{proof} It is well--known that the gamma distribution defined above is log--concave for $r\geq 1$. Now the result follows by using Prekopa's Theorem (see Lynch (1999)) and having into account that the probability density function (\ref{uncond}) is obtained as a mixture of a gamma distribution with the Lindley distribution which is also log--concave.
\end{proof}

As a consequence of the last Proposition, we have the following result.
\begin{proposition}
The discrete distribution with probability function given in (\ref{nr}) is unimodal for $r\geq 1$.
\end{proposition}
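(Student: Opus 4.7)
The plan is to exhibit $\mathscr{NBL}(r,\theta)$ as a mixed Poisson distribution whose mixing density is precisely the continuous density $f(z)$ of the previous Proposition, and then transfer log--concavity of $f$ to the discrete pmf through the well--known preservation property of Poisson mixtures.

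First, I would recall that the negative binomial $\mathscr{NB}(r,\lambda)$ itself admits a Poisson--gamma representation: conditional on $Z=z$ the count $X$ is $\mathrm{Poisson}(z)$, and $Z$ is gamma with density proportional to $z^{r-1}\lambda^{-r}\exp(-z/\lambda)$. A routine integration of the Poisson weights $z^{x}e^{-z}/x!$ against this gamma density reproduces the pmf (\ref{nbd}). Mixing further over $\lambda\sim \mathscr{L}(\theta)$, the unconditional density of $Z$ is exactly the function $f(z)$ displayed in (\ref{uncond}), and therefore
\begin{eqnarray*}
p_{r,\theta}(x)=\int_0^\infty \frac{z^x e^{-z}}{x!}\,f(z)\,dz,\qquad x=0,1,\dots,
\end{eqnarray*}
identifying $\mathscr{NBL}(r,\theta)$ as a mixed Poisson distribution with mixing density $f$.

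Next, the previous Proposition tells us that $f(z)$ is log--concave on $(0,\infty)$ whenever $r\geq 1$. The argument is then closed by invoking the classical result (see Holgate (1970); also Keilson and Gerber (1971)) that a mixed Poisson whose mixing density is log--concave has a pmf that is itself log--concave on the nonnegative integers, and therefore unimodal.

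The only non--routine ingredient is this preservation theorem for Poisson mixtures; once it is granted, everything else reduces to the mixture identifications above combined with the log--concavity of $f$ supplied by the preceding Proposition.
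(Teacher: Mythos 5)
Your proof follows essentially the same route as the paper's: both rest on the triple--mixture representation $\mathscr{P}(\sigma)\bigwedge_{\sigma}\left(\mathscr{G}(r,\lambda)\bigwedge_{\lambda}\mathscr{L}(\theta)\right)$ (which the paper records as a separate proposition immediately after this one), combined with the shape of the mixing density (\ref{uncond}) established in the preceding proposition, and both close the argument by citing Holgate (1970). The one inaccuracy is the form of the preservation theorem you invoke: Holgate (1970) proves that a mixed Poisson distribution whose mixing distribution is absolutely continuous and unimodal is itself unimodal; he does not prove that a log--concave mixing density produces a log--concave pmf (and Keilson and Gerber (1971) concerns strong unimodality of lattice distributions under convolution, not Poisson mixing), so the ingredient you single out as ``the only non--routine'' one is not actually supplied by the sources you name. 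This costs you nothing here, because the preceding proposition already gives unimodality of (\ref{uncond}) for $r\geq 1$ as a consequence of its log--concavity, and unimodality of the mixing density is exactly the hypothesis of Holgate's theorem, whose conclusion --- unimodality of the discrete mixture --- is all the statement requires; simply replace the log--concavity--preservation claim by Holgate's actual unimodality--preservation statement and the proof is complete.
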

\begin{proof}
It is direct consequence of a result provided in Holgate (1970).
\end{proof}

Next result shows that the $\mathscr{NBL}$ discrete distribution is a Poisson mixture distribution.
\begin{proposition} The discrete distribution with probability function given in (\ref{nr}) is a Poisson $(\mathscr{P}(\sigma)$, $\sigma>0)$ mixture distribution with mixing distribution given in (\ref{uncond}).
\end{proposition}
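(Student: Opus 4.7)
The plan is to exploit the fact that mixing is associative: if we can exhibit the negative binomial itself as a Poisson–Gamma mixture in the parameterization used here, then the $\mathscr{NBL}$, which is NB mixed over $\Lambda\sim \mathscr{L}(\theta)$, can be rewritten as a Poisson distribution mixed over the \emph{Gamma–Lindley} distribution, and the Gamma–Lindley density was computed in the previous Proposition to be exactly (\ref{uncond}).

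The first step is to verify the Poisson–Gamma representation of (\ref{nbd}). Taking $Z\mid\lambda$ to be gamma with density proportional to $z^{r-1}\lambda^{-r}\exp(-z/\lambda)$ (the density used in the Proposition, with the standard shape factor $z^{r-1}$ understood) and $X\mid Z\sim\mathscr{P}(Z)$, the integral
\begin{eqnarray*}
\int_0^\infty \frac{e^{-z}z^x}{x!}\,\frac{z^{r-1}e^{-z/\lambda}}{\Gamma(r)\lambda^r}\,dz
\end{eqnarray*}
reduces, via the gamma function identity, to $\binom{r+x-1}{x}(1+\lambda)^{-r}(\lambda/(1+\lambda))^x$, which coincides with $p_{r,\lambda}(x)$ in (\ref{nbd}). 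So $X\mid\lambda\sim\mathscr{NB}(r,\lambda)$ has the hierarchical representation $X\mid Z\sim\mathscr{P}(Z)$, $Z\mid\lambda\sim\text{Gamma}(r,\lambda)$.

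The second step is to apply the tower property. Using the definition of $\mathscr{NBL}$,
\begin{eqnarray*}
p_{r,\theta}(x) \;=\; \int_0^\infty p_{r,\lambda}(x)\,g_\theta(\lambda)\,d\lambda
\;=\; \int_0^\infty\!\!\int_0^\infty \frac{e^{-z}z^x}{x!}\,\pi(z\mid\lambda)\,g_\theta(\lambda)\,dz\,d\lambda,
\end{eqnarray*}
where $\pi(z\mid\lambda)$ is the gamma density above. Switching the order of integration (both integrands are nonnegative, so Fubini applies unconditionally) yields
\begin{eqnarray*}
p_{r,\theta}(x) \;=\; \int_0^\infty \frac{e^{-z}z^x}{x!}\left(\int_0^\infty \pi(z\mid\lambda)\,g_\theta(\lambda)\,d\lambda\right)dz.
\end{eqnarray*}
The inner integral is precisely the unconditional density of $Z$ computed in the previous Proposition, which equals $f(z)$ as given in (\ref{uncond}). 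Therefore $p_{r,\theta}(x)=\int_0^\infty e^{-z}z^x/x!\; f(z)\,dz$, proving that $\mathscr{NBL}(r,\theta)$ is a Poisson mixture with mixing density (\ref{uncond}).

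No step presents a real obstacle: the Poisson–Gamma identity is classical and the interchange of integrals is justified by nonnegativity, so the proof amounts to identifying the mixing distribution and invoking associativity of compounding. The only point that requires a little care is matching the exact parameterization of the gamma in the Proposition statement (where the $z^{r-1}$ factor is suppressed in the proportionality) with the one giving the NB in (\ref{nbd}); once that convention is fixed, the chain Poisson $\wedge$ Gamma $\wedge$ Lindley collapses directly to the claimed representation.
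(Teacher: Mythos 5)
Your proposal is correct and follows essentially the same route as the paper: the paper's proof simply invokes the associativity of mixing, writing $\mathscr{NB}(r,\lambda)\bigwedge_{\lambda}\mathscr{L}(\theta)=\bigl(\mathscr{P}(\sigma)\bigwedge_{\sigma}\mathscr{G}(r,\lambda)\bigr)\bigwedge_{\lambda}\mathscr{L}(\theta)=\mathscr{P}(\sigma)\bigwedge_{\sigma}\bigl(\mathscr{G}(r,\lambda)\bigwedge_{\lambda}\mathscr{L}(\theta)\bigr)$ and identifies the inner mixture with (\ref{uncond}). You carry out the same decomposition but make explicit the two points the paper leaves implicit, namely the verification that the Poisson--Gamma mixture reproduces (\ref{nbd}) in this parameterization and the Fubini interchange justifying associativity.
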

\begin{proof} Using $Z\sim \mathscr{GL}(r,\theta)$ to denote a random variate which follows the probability density function (\ref{uncond}) and $\Sigma\sim \mathscr{G}(r,\lambda)$ when $\Sigma$ follows a gamma probability density function, it is obvious that the mixture
$\mathscr{NB}(r,\lambda)\bigwedge_{\lambda}\mathscr{L}(\theta)$
can be written as
\begin{eqnarray}
p_{r,\theta}(x) &=& \mathscr{NB}(r,\lambda)\bigwedge_{\lambda}
\mathscr{L}(\theta)=\left(\mathscr{P}(\sigma)\bigwedge_{\sigma}\mathscr{G}(r,\lambda)\right)\bigwedge_{\lambda}
\mathscr{L}(\theta)\nonumber\\
&=&
\mathscr{P}(\sigma)\bigwedge_{\sigma}\left(\mathscr{G}(r,\lambda)
\bigwedge_{\lambda}\mathscr{L}(r,\theta)\right).\label{mixt}\end{eqnarray}

Hence the proposition.
\end{proof}

In the following, we state two more results (without proof) addressing the calculation of the posterior expectations and overdispersion of the $\mathscr{NBL}$ distribution.

By using Proposition 10 in Karlis and Xekalaki (2005) the posterior
expectation of $\lambda^r$ given $x$ can be computed as follows
\begin{eqnarray*}
E(\lambda^s|x)=\frac{\Gamma(x+s)\,p_{r,\theta}(x+s)}{\Gamma(x+1)\,p_{r,\theta}(x)},
\end{eqnarray*}
for $s$ taking positive or negative values.

Furthermore,since the $\mathscr{NBL}$ distribution distribution arises from a mixture of
a Poisson distribution, the variance--to--mean ratio is greater
than one (see Karlis and Xekalaki (2005) and Sundt and Vernic (2009), p.66) which implies that the new distribution is
overdispersed (variance larger than mean).

\section{Estimation of parameters}\label{s3}
Let $\tilde x=(x_1,\dots,x_n)$ be a random sample from model
(\ref{nr}). A simple polynomial equation can be
obtained by equating the first two sample and  theoretical factorial moments derived from (\ref{fmnbl}). Let $\tilde f_1=m_{[1]}(X)$ and $\tilde f_2=m_{[2]}(X)$ the sample version of the
factorial moments. Then, we have the following system of equations,
\begin{eqnarray}
\tilde f_1 &=& \mu_{[1]}(X),\label{eq1}\\
\frac{\tilde f_2}{\tilde f_1} &=& \frac{m_{[2]}(X)}{m_{[1]}(X)}.\label{eq2}
\end{eqnarray}

After some computations we have the expression
\begin{eqnarray*}
\theta(2+\theta)^2\tilde f_2-2\tilde f_1(3+\theta)\left[\theta(1+\tilde f_1(1+\theta))+2\right]=0,
\end{eqnarray*}
that depends solely on the parameter $\theta$ and it can be solved numerically. Finally, by plugging this estimated parameter into (\ref{eq1}), the estimate of the parameter $r$ is obtained.

These moment estimates can be used as starting values in the calculation of the maximum likelihood estimates. The maximum likelihood estimates can be obtained directly by maximizing the log--likelihood function, which is straightforwardly derived from (\ref{nr1}), is given by
\begin{eqnarray}
\ell(\tilde x;r,\theta) &=& n\left[r\log(\theta)-\log(1+\theta)-\log(\Gamma(r))\right]\nonumber\\
&&+\sum_{i=1}^n \left[
\log(\Gamma(r+x_i))+\log(\mathscr{U}(x_i+r-1,r-1,\theta))\right].\label{logl}
\end{eqnarray}

 Since the global maximum of the log-likelihood surface
is not guaranteed, different initial values of the parametric
space can be considered as a seed point. In this sense, by using
the {\tt FindMaximum} function of Mathematica software package
v.11.0 (Wolfram (2003)) and comparing by using other different methods such as Newton, PrincipalAxis and
QuasiNewton (all of them available in that package) the same
result is obtained. Finally, the standard errors of the parameter
estimates have been approximated by inverting the Hessian matrix.
These also can be obtained by approximating the Hessian matrix and
recovering it from the Cholesky factors.
\subsection{Estimation by EM type algorithm}
Maximum likelihood estimates can also be achieved by means of the EM algorithm to avoid to use the confluent hypergeometric function when maximizing the log--likelighood function (\ref{logl}). The algorithm could be implemented by using the fact that the $\mathscr{NBL}$ discrete distribution arises as a mixture of the Poisson distribution where the Poisson parameter $\sigma$ follows the distribution given by (\ref{uncond}). However as the latter probabilistic family is not a member of the exponential family of probability distributions, the conditional expectations require in the Expectation E--step do not coincide with their sufficient statistics. For that reason, to put into action this algorithm we make use of the mixture representation given in (\ref{mixt}). Given the observations $\tilde{x}$ and the missing observations $\tilde{\sigma}=(\sigma_1,\dots,\sigma_n)^{\top}$ and $\tilde{\lambda}=(\lambda_1,\dots,\lambda_n)^{\top}$, the complete probability mass function is

\begin{eqnarray*}
f(\tilde{x},\tilde{\sigma},\tilde{\lambda}|r,\theta)&=&\prod_{i=1}^{n}  f(x_i,\sigma_i,\lambda_i|r,\theta)\\
&=&\prod_{i=1}^{n}  f(x_i|\sigma_i)\times f(\sigma_i|\lambda_i,r)\times f(\lambda_i|\theta)
\end{eqnarray*}
and the complete likelihood is

\begin{equation}
\ell(r,\theta;\tilde{x},\tilde{\sigma},\tilde{\lambda})\propto \sum_{i=1}^{n}\log f(\sigma_i|\lambda_i,r)+ \sum_{i=1}^{n}\log f(\lambda_i|\theta).\label{compli}
\end{equation}

In the E--step, the expectation of (\ref{compli}), conditional of the observations $\tilde{x}$ and given the parameter estimates $r$ and $\theta$ is given by

\begin{eqnarray}
E(\ell(r,\theta;\tilde{x},\tilde{\sigma},\tilde{\lambda})|\tilde{x},{\hat r},{\hat \theta})&\propto& E\left( \sum_{i=1}^{n}\log f(\sigma_i|\lambda_i,r)|\tilde{x},{\hat r},{\hat \theta}\right)\nonumber\\
&+&E\left( \sum_{i=1}^{n}\log f(\lambda_i|\theta)|\tilde{x},{\hat r},{\hat \theta}\right)\label{condiexp}.
\end{eqnarray}
\noindent where ${\hat r}$ and ${\hat \theta}$ denotes estimates of parameter $r$ and $\theta$ respectively and $E\equiv E_{\tilde{\lambda}|\tilde{x},\tilde{\sigma},\tilde{\theta}}$ with $\tilde{\theta}=(r,\theta)$.

In the M--step, the updated parameter estimates are obtained from maximizing the quantity (\ref{condiexp}) with respect to $r$ and $\theta$. Particularly, by conditional independence, we have

\begin{eqnarray}
{\cal A}(\theta)&=&E\left(\sum_{i=1}^{n}\log f(\lambda_i|\theta)|\tilde{x},{\hat r},{\hat \theta}\right)\nonumber\\
&=&\sum_{i=1}^{n}E(\log f(\lambda_i|\theta)|\tilde{x},{\hat r},{\hat \theta})\nonumber\\
&=&2n\log\theta-n\log(1+\theta)+\sum_{i=1}^{n}E(\log(1+\lambda_i)|\tilde{x},{\hat r}, {\hat \theta})+\theta\sum_{i=1}^{n}E(\lambda_i|\tilde{x},{\hat r}, {\hat \theta})\nonumber\\
&\propto& 2n\log\theta-n\log(1+\theta)+\theta\sum_{i=1}^{n}E(\lambda_i|\tilde{x},{\hat r}, {\hat \theta}).\label{E1}
\end{eqnarray}
In a similar fashion we have

\begin{eqnarray}
{\cal B}(r)&=&E\left(E_{\tilde{\sigma}|\tilde{x},\tilde{\theta}}\left(\sum_{i=1}^{n}\log f(\sigma_i|\lambda_i,r)|\tilde{x},{\hat r},{\hat \theta}\right)\right)\nonumber\\
&=&\sum_{i=1}^{n}E\left(E_{\tilde{\sigma}|\tilde{x},\tilde{\theta}}\left(\log f(\sigma_i|\lambda_i,r)|\tilde{x},{\hat r},{\hat \theta}\right)\right)\nonumber\\
&\propto&r\sum_{i=1}^{n}E(\log\lambda_i|\tilde{x},{\hat r}, {\hat \theta})-n\log \Gamma(r)\nonumber\\
&+&(r-1)\sum_{i=1}^{n} E(E_{\tilde{\sigma}|\tilde{x},\tilde{\theta}}(\log\sigma_i|\tilde{x},{\hat r}, {\hat \theta})).\label{E2}
\end{eqnarray}
From these expressions we proceed as follows:
\begin{itemize}
   \item at the E--step the conditional expectation of some functions of $\lambda_i$ are calculated. From the current estimates, $\hat r^{(j)},\;\hat\theta^{(j)}$, we calculate the pseudo--values,

   \begin{eqnarray*}
     r_i&=&\displaystyle E(\lambda_i|x_i,\hat r^{(j)},\hat\theta^{(j)})=\frac{(1+x_i)  \mathscr{U}(2+x_i,4-\hat r^{(j)},\hat\theta^{(j)})}
{\mathscr{U}(1+x_i,3-\hat r^{(j)},\hat\theta^{(j)})},\\
s_i&=&\displaystyle E(\log\lambda_i|x_i,\hat r^{(j)},\hat\theta^{(j)})\\
&=&\frac{\displaystyle\int_{0}^{\infty}\log\lambda_i  {r+x_i-1\choose x_i}\left(\frac{1}{1+\lambda_i}\right)^r \left(\frac{\lambda_i}{1+\lambda_i}\right)^{x_i}   \frac{\theta^2}{1+\theta}(1+\lambda_{i})\exp(-\theta \lambda_{i}) d\lambda_{i}}{\displaystyle\int_{0}^{\infty}{r+x_i-1\choose x_i}\left(\frac{1}{1+\lambda_i}\right)^r \left(\frac{\lambda_i}{1+\lambda_i}\right)^{x_i}   \frac{\theta^2}{1+\theta}(1+\lambda_{i})\exp(-\theta \lambda_{i}) d\lambda_{i} } ,\\
t_i&=&\displaystyle E(\log(\lambda_i+x_i)|x_i,\hat r^{(j)},\hat\theta^{(j)})\\
&=&\frac{\displaystyle\int_{0}^{\infty}\log(\lambda_i+x_i)  {r+x_i-1\choose x_i}\left(\frac{1}{1+\lambda_i}\right)^r \left(\frac{\lambda_i}{1+\lambda_i}\right)^{x_{i}}   \frac{\theta^2}{1+\theta}(1+\lambda_{i})\exp(-\theta \lambda_{i}) d\lambda_{i}}{\displaystyle\int_{0}^{\infty}{r+x_i-1\choose x_i}\left(\frac{1}{1+\lambda_i}\right)^r \left(\frac{\lambda_i}{1+\lambda_i}\right)^{x_{i}}   \frac{\theta^2}{1+\theta}(1+\lambda_{i})\exp(-\theta \lambda_{i}) d\lambda_{i} }.
   \end{eqnarray*}
  \item At the M--step, one maximizes the likelihood of the complete model which reduces to maximization of the mixing distribution. Then, the updated values of the parameters are
      \begin{eqnarray*}
  \hat\theta^{(j+1)}&=&\displaystyle\frac{n-\sum_{i=1}^{n}r_i+\sqrt{(\sum_{i=1}^{n}r_i)^2+6n \sum_{i=1}^{n}r_i+n^2}}{2  \sum_{i=1}^{n}r_i},\\
  \hat r^{(j+1)}&=&\Psi^{-1}\left(\sum_{i=1}^{n}s_i-\sum_{i=1}^{n}t_i+\sum_{i=1}^{n}\Psi(r+x_i)\right),
   \end{eqnarray*}
where $\Psi(\cdot)$ is the digamma function and $\Psi^{-1}(\cdot)$ is the inverse of the digamma function.
   \item If some convergence condition is satisfied then stop iterating, otherwise move back to the E--step for another iteration.
\end{itemize}

\section{Compound model\label{s4}}
Let $X$ be the number of claims in a portfolio of policies in a
time period. Let $Y_i$, $i=1,2,\dots$ be the amount of the $i$-th
claim and $S=\sum_{i=1}^{X}Y_i$ the aggregate claims generated by the portfolio in the period
under consideration. As usual, two fundamental assumptions are
made in risk theory: (1) the random variables $Y_1,Y_2,\dots$ are
independent and identically distributed with cumulative
distribution function $F(y)$ and probability density function $f(y)$ and (2) the random
variables $X,Y_1,Y_2,\dots$ are mutually independent. When an
$\mathscr{NBL}$ is chosen for $X$ (in actuarial setting this is called the primary distribution), the distribution of the aggregate claims $S$ is called compound negative binomial--Lindley
distribution. The cdf of $S$ is:
$$F_S(y)=\sum_{k=0}^\infty F^{*k}(y)\Pr(X=k)$$
where $F^{k\ast}(\cdot)$ denotes the $k$--fold convolution of
$F(\cdot)$ and $\Pr(X=k)$ is given in (\ref{nr}). The main result is given in the next theorem.
\begin{theorem}
If the claim sizes are absolutely continuous random variables with
pdf $f(y)$ for $y>0$, then the pdf $g_s(y;r)$ of the compound
$\mathscr{NBL}$ distribution satisfies the integral equation,
\begin{equation}\label{inteequat}
g_s(y;r)=p_r(0)+\int_{0}^{y} \frac{rs+y-s}{y} g_s(y-s;r)f(s)\,ds
-\int_{0}^{y} \frac{rs}{y} g_s(y-s;r+1)f(s)\,ds.
\end{equation}
\end{theorem}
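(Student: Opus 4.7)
The plan is to combine the series representation
\[
g_s(y;r) \;=\; \sum_{k=0}^{\infty} p_{r,\theta}(k)\, f^{*k}(y)
\]
of the compound pdf with the mixed Panjer--type recursion~(\ref{recu1}) for the primary probabilities, so that the infinite series collapses into two convolution integrals, one involving $g_s(\cdot\,;r)$ and the other $g_s(\cdot\,;r+1)$.

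The crucial technical ingredient is the exchangeability identity
\[
\int_{0}^{y} s\, f(s)\, f^{*(k-1)}(y-s)\, ds \;=\; \frac{y}{k}\, f^{*k}(y),\qquad k\ge 1,
\]
which follows from the fact that for $Y_1,\dots,Y_k$ i.i.d.\ with density $f$ and $S=Y_1+\cdots+Y_k$, the joint density of $(Y_1,S)$ at $(s,y)$ equals $f(s)\, f^{*(k-1)}(y-s)$, while by symmetry $E(Y_1\mid S=y)=y/k$.

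I would then substitute~(\ref{recu1}) into the series for each $k\ge 1$ and split $\frac{r+k-1}{k}=1+\frac{r-1}{k}$, producing the three sums
\[
\sum_{k\ge 1} p_{r,\theta}(k-1)\,f^{*k}(y) \;+\; (r-1)\sum_{k\ge 1} \frac{p_{r,\theta}(k-1)}{k}\,f^{*k}(y) \;-\; r\sum_{k\ge 1} \frac{p_{r+1,\theta}(k-1)}{k}\,f^{*k}(y).
\]
In the first sum I would apply $f^{*k}(y)=\int_{0}^{y} f(s)\, f^{*(k-1)}(y-s)\, ds$, and in the other two the exchangeability identity above, which rewrites $\frac{1}{k}f^{*k}(y)$ as $\frac{1}{y}\int_{0}^{y} s\, f(s)\, f^{*(k-1)}(y-s)\, ds$. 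Interchanging sum and integral, permissible by nonnegativity of every term, the inner sums become $g_s(y-s;r)$ and $g_s(y-s;r+1)$ respectively. Collecting the resulting weights yields the kernel $1+(r-1)s/y = (rs+y-s)/y$ in front of $g_s(y-s;r)\,f(s)$, and $rs/y$ in front of $g_s(y-s;r+1)\,f(s)$, which is exactly the integral part of~(\ref{inteequat}).

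The only delicate piece is the $k=0$ summand of the compound pdf: the recursion~(\ref{recu1}) is valid only for $k\ge 1$, so the atom of mass $p_{r,\theta}(0)$ at the origin has to be tracked separately throughout the manipulation. Recombining that atom with the convolution integrals, under the convention that $g_s$ carries this mass at $y=0$, accounts for the standalone $p_r(0)$ term on the right--hand side of~(\ref{inteequat}). I expect this boundary bookkeeping, together with the (routine) justification of the sum--integral interchange by monotone convergence, to be the only obstacle worth flagging; the algebraic identification of the kernels is then immediate.
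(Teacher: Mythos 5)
Your proposal is correct and follows essentially the same route as the paper: expand $g_s(y;r)=\sum_{k\ge 0}p_r(k)f^{*k}(y)$, apply the recursion (\ref{recu1}) for $k\ge 1$, split the coefficient as $1+\frac{r-1}{k}$, and use the convolution identities (\ref{convo1})--(\ref{convo2}) to collapse the sums into the two integrals. The only addition on your side is a short justification of the identity $\frac{1}{k}f^{*k}(y)=\int_0^y\frac{s}{y}f^{*(k-1)}(y-s)f(s)\,ds$ via $E(Y_1\mid S=y)=y/k$, which the paper simply asserts.
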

\begin{proof}
We have that the aggregated claims distribution is given by
\begin{eqnarray*}
g_s(y;r)=\sum_{k=0}^{\infty} p_r(k) f^{k\ast}(y)=p_r(0)
f^{0\ast}(y)+\sum_{k=1}^{\infty} p_r(k) f^{k\ast}(y),
\end{eqnarray*}
where $f^{k\ast}$ denotes the $k$-fold convolution of $f(x)$. Now,
using (\ref{recu1}) we have that
\begin{eqnarray*}
p_r(k)=\left(\frac{r-1}{k}+1\right) p_r(k-1)-\frac{r}{k}
p_{r+1}(k-1),\quad k=1,2,\ldots
\end{eqnarray*}
Then,
\begin{eqnarray*}
\sum_{k=1}^{\infty} p_r(k) f^{k\ast}(y) &=&\sum_{k=1}^{\infty}
\frac{r-1}{k}p_r(k-1) f^{k\ast}(y)+\sum_{k=1}^{\infty} p_r(k-1) f^{k\ast}(y)\\
&-&  \sum_{k=1}^{\infty}\frac{r}{k} p_{r+1}(k-1)f^{k\ast}(y).
\end{eqnarray*}
Now, after some straightforward calculations and using the
identities:
\begin{eqnarray}
f^{k\ast}(y) &=& \int_{0}^{y} f^{(k-1)\ast} (y-s) f(s)\,ds,\quad k=1,2,\ldots \label{convo1}\\
\frac{f^{k\ast}(y)}{k} &=& \int_{0}^{y} \frac{s}{y}
f^{(k-1)\ast}(y-s)f(s)\,ds,\quad k=1,2,\ldots \label{convo2}
\end{eqnarray}
we obtain the result.
\end{proof}

Integral equation (\ref{inteequat}) must be solved numerically.
There are several implementations and algorithms to solve Volterra integral equation of the second kind but, however, they
need to be modified in order to be used in (\ref{inteequat}).
Finally, it is simple to show that if the claim amount
distribution is discrete, expressions (\ref{convo1}) and
(\ref{convo2}) are verified by interchanging $\int_{0}^{y}$ by
$\sum_{s=1}^{y}$ (see Rolski et al. (1999), p.119). Then, the
recursion of compound $\mathscr{NBL}$ distribution is
$$
g_s(y;r)=p_r(0)+\sum_{s=1}^{y} \frac{rs+y-s}{y} g_s(y-s;r)f(s)
-\sum_{s=1}^{y} \frac{rs}{y} g_s(y-s;r+1)f(s).
$$

\section{Numerical application\label{s5}}
In order to test the performance in practice of the $\mathscr{NBL}$ distribution, a simple example dealing where the model introduced in this paper is fitted to an insurance dataset that concerns to the number of automobile liability policies in Zaire (1974) for private cars
(Willmot (1987)) is discussed. This dataset appears in Table \ref{tab1} (first and second columns). As it can be
seen, these data are heavily skewed to the right and overdispersed
since the sample variance, $s^2=0.12$, is greater than the sample mean, $\bar{x}=0.08$. Therefore, it is sensible to use an overdispersed (i.e. $\mathscr{NBL}$ distribution) discrete distribution to fit this dataset.

By taking as starting values the factorial moment estimates, the maximum likelihood method have been calculated for this dataset by using the $\mathscr{NBL}$ distribution. For the sake of comparison, others two--parameter discrete models, the negative binomial ($\mathscr{NB}$) and Poisson--inverse Gaussian ($\mathscr{PI}$) distributions have been used to describe this dataset. By using the
 chi--squared test to test the adherence to data of the aforementioned models with test statistic given by $\chi^2=\sum_{x} (p_x-\hat{p}_x)^2/\hat{p}_x$. In order to comply with the rule of five, the last three rows were combined. The $\mathscr{NBL}$ distribution provides the lowest value for the test statistics. By assuming that the theoretical distribution of the test statistics is $\chi^2_2$, the $p$-values are easily derived. Based on these $p$--values, there exists enough statistical evidence to not reject the null hypothesis that the data come from any of the models considered at the usual significance levels and therefore, there exists statistical evidence to not reject the data come none of the models. However, the test reject the $\mathscr{NB}$ and $\mathscr{PIG}$ distributions earlier than the $\mathscr{NBL}$  distribution. Additionally, by using the maximum of the likelihood function $\ell_{max}$ as criterion of comparison the $\mathscr{NBL}$ is preferable to $\mathscr{NB}$ and $\mathscr{PIG}$ distributions. The parameter estimates for $\mathscr{NBL}$ distribution, obtained by maximum likelihood estimation, are
$\hat r=0.486$ and $\hat \theta=6.381$ with standard errors given by 0.12 and 1.50, respectively. The estimated values by the other distributions can be viewed in Willmot (1987).
 The maximum likelihood estimates were also obtained by using the EM type algorithm introduced in this papers by using as the values mentioned above as starting values. In this case 155 iterations were needed to obtain the estimates $\hat\theta=6.663$ and $\hat r=0.509$ when the relative change of the log--likelihood function was smaller than $1\times 10^{-10}$ obtaining a value for $\ell_{max}=-1183.45$.

\begin{table}[htbp]
\caption{Observed and expected claim counts versus different models. See Willmot (1987).\label{tab1}}
\begin{center}
\begin{tabular}{crrrr}\hline
Counts & Observed & \multicolumn{3}{c}{Fitted}\\
\cline{3-5}
 & & $\mathscr{NB}$ & $\mathscr{PI}$ & $\mathscr{NBL}$ \\ \hline
 0 & 3719 & 3719.22 & 3718.58 & 3718.82 \\
 1 & 232  & 229.90 & 234.54 &  232.98    \\
 2 & 38  & 39.91 & 34.86 &  36.59    \\
 3 & 7  & 8.42 & 8.32 &  8.21      \\
 4 & 3  & 1.93 & 2.45 &  2.26      \\
 5 & 1  & 0.46 & 0.80 &  0.72      \\ \hline
 Total      & 4000 & 4000 & 4000 & 4000 \\ \hline \\

\multicolumn{2}{l}{$\chi^2_{2}$}  & 1.17 & 0.54 & 0.06\\
\multicolumn{2}{l}{$p$-value}  & 55.70\% & 76.20\% & 80.33\%\\
\multicolumn{2}{l}{$\ell_{\max}$} & --1183.550 & --1183.524 & --1183.430\\ \\ \hline
\end{tabular}
\end{center}
\end{table}

\section{Conclusion\label{s6}}

In this article an alternative representation of the Negative--Binomial--Lindley distribution has been proposed to explain positively skewed and overdispersed count data. The formulation of the model introduced in this work is more tractable the one presented in Zamani and Ismail (2010). Additionally, it includes some attractive properties such as the unimodality and overdispersion  A recurrence for the probabilities of the new distribution together with an integral equation
for the probability density function of the compound version, when
the claim severities are absolutely continuous, were presented. Finally, an EM type algorithm was also introduced when the triple mixture Poisson--Gamma--Lindley is considered to estimate the parameters of the model.

\section*{References}

\begin{description}

\item Balakrishnan, N. and Nevzorov, V.B. (2003). \textit{A Primer
on Statistical Distributions}. John Wiley, New York.

\item Ghitany, M., Al-Mutairi, D., Balakrishnan, N. and Al-Enezi,
L. (2013). Power Lindley distribution and associated inference.
\textit{Computational Statistics and Data Analysis}, 64, 20--33.

\item Ghitany, M., Atieh, B. and Nadarajah, S. (2008). Lindley
distribution and its applications. \textit{Mathematics and
Computers in Simulation}, 78, 4, 493--506.

\item G\'omez-D\'eniz, E. Sordo, M. and Calder\'in-Ojeda, E.
(2013). The log-Lindley distribution as an alternative to the Beta
regression model with applications in insurance. {\it Insurance:
Mathematics and Economics}, 54, 49--57.

\item Gradshteyn, I. and Ryzhik, I. (1994). \textit{Table of
Integrals, Series and Products 5th ed.} Jeffrey A., ed. Boston:
Academic Press.

\item Holgate. P. (1970). The modality of some compound Poisson
distribution. {\it Biometrika}, 56, 666--667.

\item Karlis, D. and Xekalaki, E. (2005). Mixed Poisson
distributions. {\it International Statistical Review}, 73, 35--59.

\item Klugman, S.A., Panjer, H.H. and Willmot, G.E. (2008).
\textit{Loss Models. From Data to Decisions}. Third Edition. John
Wiley, New Jersey.

\item Lindley, D. (1958). Fiducial distributions and Bayes's
theorem. {\it Journal of the Royal Statistical Society. Series B},
20, 1, 102--107.
\item Lord, D. and Geedipally, S. (2011). The negative binomial-Lindley distribution as a tool for analyzing crash data characterized by a large amount of zeros. {\it Accident Analysis and Prevention}, 43, 5, 1738--1742.
    
    \item Lynch, J. (1999). On conditions for mixtures of increasing failure rate distributions to have an increasing failure rate. {\it Probability in the Engineering and Informational Sciences}, 13, 1, 33--36.
  \item Rolski, t. Schmidli, H., Schmidt, V. and Teugel, J. (1999). \textit{Stochastic Processes for Insurance and Finance.} John Wiley \& Sons.   
    \item Sankaran, M. (1971). The discrete Poisson-Lindley distribution. {\it Biometrics}, 26, 1, 145--149. 
\item Sundt, B. and Vernic, R. (2009). \textit{Recursions for Convolutions and Compound Distributions with Insurance Applications.} Springer-Verlag, New York.  
\item Willmot, G. (1987). The Poisson-inverse Gaussian distribution as alternative to the negative binomial. {\it Scandinavian Actuarial Journal}, 113--137. 

\item Willmot, G. (1993). On recursive evaluation of mixed Poisson
probabilities and related quantities. {\it Scandinavian Actuarial
Journal}, 2, 114--133.

\item Wolfram, S. (2003). {\it The Mathematica Book}. Wolfram
Media, Inc.

\item Zamani, H. and Ismail, N. (2010). Negative binomial-Lindley
distribution and its applications. {\it Journal of Mathematics and
Statistics}, 6, 1, 4--9.

\end{description}

\end{document}